\documentclass[12pt]{article}

\usepackage{graphicx}
\usepackage{multirow}%
\usepackage{amsmath,amssymb,amsfonts,amsthm}%
\usepackage{bm}%
\usepackage{mathrsfs}%
\usepackage{xcolor}%
\usepackage{textcomp}%
\usepackage{manyfoot}%
\usepackage{booktabs}%
\usepackage{algorithm}%
\usepackage{algorithmicx}%
\usepackage{algpseudocode}%
\usepackage{listings}%
\usepackage{float}

\usepackage[letterpaper,margin=1in]{geometry}

\newtheorem{theorem}{Theorem}%

\renewenvironment{abstract}
	{\quotation}
	{\endquotation}

\date{}

\makeatletter
\renewcommand{\fnum@figure}{\textbf{Figure \thefigure}}
\renewcommand{\fnum@table}{\textbf{Table \thetable}}
\makeatother

\usepackage{scicite}

\usepackage{url}

\def\scititle{
A discrete crack-tip theory for nonlinear lattice networks
}

\title{\large\bfseries \boldmath \scititle}

\author{
	Jiabin~Liu$^{1}$,
	Shaoting~Lin$^{1, 2\ast}$,
  Juntao~Huang$^{3\ast}$\and	
	\small$^{1}$Department of Mechanical Engineering, Michigan State University, East Lansing, MI, USA.\and
    \small$^{2}$Department of Mechanical Engineering, University of Wisconsin - Madison, Madison, WI, USA.\and
  \small$^{3}$Department of Mathematical Sciences, University of Delaware, Newark, DE, USA.\and
	\small$^\ast$Corresponding authors. Email: slin88@wisc.edu; huangjt@udel.edu.
}

\begin{document} 

\maketitle

\begin{abstract}
Crack-tip fields govern deformation localization and failure initiation. Classical continuum fracture mechanics describes these fields through theories such as the Hutchinson-Rice-Rosengren (HRR) field for nonlinear power-law solids. However, continuum descriptions break down near cracks in soft and architected materials, where load is transmitted through discrete chains, fibers, or struts. 
Here, we develop a discrete crack-tip theory for lattice networks with nonlinear chains. The theory has two central components. First, at large deformation, the strain of a representative chain in layer $i$ depends approximately linearly on the applied macroscopic strain, $\varepsilon_i\approx k_i(\lambda-1)$, defining a layer-dependent strain-amplification factor $k_i$. Second, along topology-selected chain directions, termed discrete HRR lines, the layer-to-layer ratios of $k_i$ follow a two-regime scaling law. 
Together, for a power-law chain force-strain relation with exponent $p$, the inner discrete regime predicts $\varepsilon_i\sim i^{-1/p}$ and $f_i\sim i^{-1}$, which differs from the classical continuum HRR prediction. The theory also explains why the intrinsic fracture energy approaches a size-independent limit as the network size increases. Photoelastic hydrogel experiments further validate our theory. These results reveal a two-regime crack-tip scaling law in nonlinear lattice networks and provide a framework for predicting chain deformation and intrinsic fracture energy.
\end{abstract}

\section{Introduction}\label{sec:intro}

Crack-tip fields characterize the local stress and deformation surrounding a crack tip and are central to fracture mechanics. In continuum fracture mechanics, analytical theories have established the stress intensity $K$-field of linear elastic fracture mechanics (LEFM) \cite{anderson2005fracture}, the Hutchinson-Rice-Rosengren (HRR) field for power-law-hardening elastoplastic solids \cite{hutchinson1968singular,rice1968plane}, and finite-deformation crack-tip fields for hyperelastic solids \cite{knowles1973asymptotic,knowles1974finite,stephenson1982equilibrium,geubelle1994finite,long2011finite,long2011effects,long2015crack,long2021fracture}.
These theories describe how stress and strain vary with distance and direction from a crack tip. In particular, for a continuum material governed by $\sigma\sim\varepsilon^p$, the HRR field predicts $\varepsilon\sim r^{-1/(p+1)}$ and $\sigma\sim r^{-p/(p+1)}$ with $r$ being the distance from the crack tip (Fig.~1a).

Continuum crack-tip fields, however, rely on a scale separation between the microscopic material scale and the scale over which the field varies.
This separation can break down near cracks in many soft and architected materials, where individual chains, fibers, or struts carry large deformation and initiate failure. As the crack tip is approached, the field may vary over distances comparable to the spacing between these load-bearing elements, making a continuum description insufficient.

Discrete network models therefore provide a natural framework for resolving the crack-tip mechanics of these materials. 
Studies of discrete fracture have shown that such bond-scale structure produces effects absent from continuum descriptions, including lattice trapping \cite{thomson1971lattice} and a pronounced dependence of fracture toughness and failure mechanisms on lattice topology and orientation \cite{fleck2007damage,nguyen2019role}. More recent studies have investigated crack fields and fracture resistance in elastoplastic lattices, soft cellular solids, polymer-network models, and architected materials undergoing large deformation \cite{tankasala2015cracktip,ma2016fracture,ghareeb2020adaptive,lei2021mesoscopic,bastek2025mode,yeerella2025fracture,huang2025topological}. Studies of polymer-like networks have further demonstrated that intrinsic fracture energy depends on chain nonlinearity and network heterogeneity \cite{deng2023nonlocal,hartquist2025scaling,hartquist2025fracture}. Nevertheless, a  framework connecting the deformation of the load-bearing chains near a crack tip to the macroscopic network deformation remains incomplete, particularly for networks composed of nonlinear chains undergoing large deformation.

This gap raises three questions. First, how does the deformation of each chain near the crack tip depend on the applied macroscopic stretch? Second, how do chain strain and force vary with layer number and lattice direction? Third, how does this spatial distribution control crack-tip deformation and intrinsic fracture energy as the network size increases?

Here, we address these questions using two-dimensional notched square and triangular lattice networks with  nonlinear chains and subjected to quasi-static pure-shear loading.
For these networks, our theory consists of two central components. First, we show that, at large deformation, the strain of a representative chain in layer $i$ depends approximately linearly on the applied macroscopic strain, $\varepsilon_i\approx k_i(\lambda-1)$, where $k_i$ is the layer-dependent slope, or strain-amplification factor. Second, along selected lattice directions, which we term discrete HRR lines, we develop a two-regime scaling law for the layer-to-layer slope ratios $k_i/k_{i+1}$, consisting of an inner discrete regime near the crack tip and an outer HRR-like regime far away from the crack tip. 

Together, the two components of our theory yields two main mechanical consequences. For power-law chains governed by $f\sim\varepsilon^p$, the inner regime predicts $\varepsilon_i\sim i^{-1/p}$ and $f_i\sim i^{-1}$, which differ from the classical continuum HRR predictions $\varepsilon\sim r^{-1/(p+1)}$ and $\sigma\sim r^{-p/(p+1)}$. The theory further predicts the size dependence of the crack-tip amplification, $k_1\sim N^{1/(p+1)}$, where $N$ is the number of layers in the network. This scaling explains why the intrinsic fracture energy $\Gamma_0$ converges to a size-independent limit as $N$ increases. Our photoelasticimitry experiments on hydrogel fabrics further validate the linear relation between local and macroscopic strain and the localization along topology-selected chain directions. These results reveal a topology-dependent two-regime scaling law in nonlinear lattice networks and establish a framework for predicting chain deformation and intrinsic fracture energy.

\section{Models}\label{sec:model}

We consider two-dimensional notched lattice networks with square and triangular topologies. Each network consists of nodes connected to their nearest neighbors by load-bearing chains. Each chain is modeled as a linear or nonlinear spring with undeformed length $l_0$ and force-stretch relation $f=f(\Lambda)$, where $\Lambda$ is the stretch ratio of that edge. Fracture is introduced by prescribing a critical stretch $\Lambda_f$ (equivalently a critical force $f_f=f(\Lambda_f)$), beyond which the spring breaks.

The nodal configuration of the network is denoted by $\{\boldsymbol{x}_i\}_{i=1}^{N_v}$, where $N_v$ is the total number of nodes. For each edge $e$ connecting node $\boldsymbol{x}_i$ and node $\boldsymbol{x}_j$, the stretch ratio is computed as $\Lambda_{e} = \|\boldsymbol{x}_i - \boldsymbol{x}_j\| / l_0$. The total elastic energy of the network is computed by summing the energy stored in all the springs,
\begin{equation}
  U = \sum_{e\in E} l_0 \int_1^{\Lambda_{e}} f(\Lambda) \, d\Lambda,
\end{equation}
where $E$ is the set of all edges in the network. At each loading step, the equilibrium configuration is obtained by minimizing $U$ with respect to the free nodal coordinates using the Fast Inertial Relaxation Engine (FIRE) algorithm \cite{bitzek2006structural}. Displacement boundary conditions are imposed on the top and bottom boundaries to perform quasi-static pure-shear loading. The network height changes from its undeformed value $h_0$ to $h$, and the macroscopic network stretch ratio is defined as $\lambda = h/h_0$.

The chain constitutive law is taken to be a monomial force-stretch relation:
\begin{equation}\label{eq:chain-constitutive-law}
  f = (\Lambda - 1)^p,
\end{equation}
where $p>0$ is a parameter, $\Lambda$ is the stretch ratio, and $f$ denotes the tensile force. The constitutive law \eqref{eq:chain-constitutive-law} is linear for $p=1$,  strain-stiffening for $p>1$, and strain-softening for $0<p<1$.
The intrinsic fracture energy $\Gamma_0$ is computed from the work done in an identical unnotched sample, integrated up to the critical height $h_c$ at the crack initiation (i.e., when a spring at the crack tip first reaches $\Lambda_f$ and breaks):
\begin{equation}
  \Gamma_0 = \int_{h_0}^{h_c} s \, dh,
\end{equation}
where $s$ is the nominal stress of the unnotched sample.

\section{Results}\label{sec:result}

\subsection{Linear scaling of stretch}

We first examine how the stretch of the near-tip chains evolves with the applied macroscopic network stretch. Let $\Lambda_i$ denote the stretch ratio of the representative chain in the $i$-th layer above the crack tip, with $i=1$ denoting the crack-tip chain, $i=2$ corresponding to the chain immediately above the crack tip, and so on (Fig.~\ref{fig:linear_scaling}a). For the monomial constitutive law (Fig.~\ref{fig:linear_scaling}b), the simulations show that, after an initial transient, the chain strain $\varepsilon_i=\Lambda_i-1$ is approximately linearly increasing with the macroscopic network strain $(\lambda-1)$ at large deformation:
\begin{equation}
    \Lambda_i-1 \approx k_i(\lambda-1),
    \label{eq:linear_scaling}
\end{equation}
where $k_i$ is a layer-dependent slope (Fig.~\ref{fig:linear_scaling}c and fig.~\ref{fig-supp:Figure_SI_1}). In addition, the crack-tip chain has the largest slope, indicating stretch concentration at the crack tip.

This linear scaling is attributed to an approximately frozen network geometry beyond a deformation threshold. At small deformation, the chains around the crack tip rotate and rearrange (Fig.~\ref{fig:linear_scaling}a). At large deformation, the angles between representative near-tip chains become nearly unchanged with further loading (Fig.~\ref{fig:linear_scaling}d). When the local chain angles are unchanged, the force balance at each node, together with the monomial constitutive law in Eq.~\eqref{eq:chain-constitutive-law}, implies that the stretch ratios of chains scale proportionally with $\lambda$. A formal statement of this argument is given in Theorem \ref{thm:fixed-angle-force-strain-ratios} in the appendix.

The linear slopes depend on both the network layer number $N$ and the nonlinearity index $p$. For fixed $p$, the slope increases with $N$, showing that larger networks generate stronger stretch concentration at the crack tip (Fig.~\ref{fig:linear_scaling}e). For fixed $N$, the slope decreases as $p$ increases (Fig.~\ref{fig:linear_scaling}f), suggesting weaker stretch concentration for strain-stiffening chains. This trend is consistent with the result reported in polymer-like networks with strain-stiffening chains \cite{deng2023nonlocal}.

The linear relation is not restricted to the monomial constitutive laws. We consider a polynomial chain constitutive law $f(\Lambda)=(\Lambda - 1)+(\Lambda - 1)^2$. In this case, the crack-tip stretch ratio still scales linearly with $\lambda$ at large deformation, and its slope is close to that of the quadratic monomial constitutive law $f(\Lambda)=(\Lambda - 1)^2$ (Fig.~\ref{fig:linear_scaling}g). This indicates that, at large deformation, the highest-order term in the chain constitutive law controls the near-tip scaling. 
Constitutive laws for polymer chains derived from statistical mechanics, including models involving the inverse Langevin function, can often be approximated by truncated power series expansions over finite stretch ranges \cite{arruda1993three,itskov2012taylor}. The same mechanism therefore applies whenever such a polynomial approximation remains accurate.

Finally, we test the triangular lattice networks (Fig.~\ref{fig:linear_scaling}h). Although the local crack-tip topology is more complex because no chain is exactly aligned with the vertical loading direction, representative chains still show approximately linear scaling at large deformation. The dependence of the slopes on $p$ remains the same: increasing $p$ weakens stretch concentration at the crack-tip chain (Fig.~\ref{fig:linear_scaling}i). Additional results for different values of $p$ and $N$ are shown in fig.~\ref{fig-supp:Figure_SI_2}.

\subsection{Discrete HRR-like scalings for stretch and force}

The linear relations in Eq.~\eqref{eq:linear_scaling} reduces the crack-tip stretch field to a sequence of slopes. We next quantify how these slopes vary with the distance from the crack tip. First, for fixed $p$, $k_i$ decreases with $i$, indicating that the stretch is concentrated around the crack tip (Fig.~\ref{fig:hrr_scaling}a and fig.~\ref{fig-supp:Figure_SI_3}). 
Then, we examine the chains near the crack tip. At large deformation, the two chains connected to the upper node of the crack-tip chain become nearly parallel and carry approximately equal tensile forces, denoted by $f_2$. Their resultant force balances the tensile force $f_1$ carried by the crack-tip chain (Fig.~\ref{fig:hrr_scaling}a), giving
\begin{equation}\label{eq:near-tip-force-balance}
  f_1 \approx 2 f_2.
\end{equation}
Combining this relation with the the constitutive law \eqref{eq:chain-constitutive-law} and the linear scaling \eqref{eq:linear_scaling} gives
\begin{equation}\label{eq:slope-ratio-crack-tip}
  k_1/k_2 \approx 2^{\frac{1}{p}}.
\end{equation}
Thus, the slope ratio between the first two layers is determined by the chain nonlinearity $p$ and independent of the network size $N$.

Motivated by the  relation \eqref{eq:slope-ratio-crack-tip} and the numerical observations for subsequent layers, we further propose the following scaling law for the inner discrete crack-tip regime:
\begin{equation}\label{eq:slope-ratio-inner-regime}
\frac{k_i}{k_{i+1}}
\approx
\left(1+\frac{1}{i}\right)^{1/p},
\qquad 1\leq i\leq n_0.
\end{equation}
At $i=1$, Eq.~\eqref{eq:slope-ratio-inner-regime} recovers the relation \eqref{eq:slope-ratio-crack-tip} derived from the local force balance at the crack tip. For $2\leq i\leq n_0$, it represents an empirical generalization supported by the numerical results. Here, $n_0$ denotes the approximate outer boundary of this regime and typically satisfies $3\le n_0\le 6$ in the present simulations. The exponent $1/p$ characterizes a discrete near-tip effect that is not captured by the continuum HRR theory.

Beyond the inner discrete crack-tip regime, the simulations reveal a different scaling:
\begin{equation}\label{eq:slope-ratio-outer-regime}
\frac{k_i}{k_{i+1}}
\approx
\left(1+\frac{1}{i}\right)^{\frac{1}{p+1}},
\qquad i>n_0.
\end{equation}
The exponent $1/(p+1)$ coincides with the strain exponent of the classical continuum HRR field for a constitutive relation $\sigma\sim\varepsilon^p$. We therefore refer to this range as the outer continuum HRR-like regime. Together, Eqs.~\eqref{eq:slope-ratio-inner-regime} and \eqref{eq:slope-ratio-outer-regime} define a two-regime scaling law along the discrete HRR line. The numerical slope ratios agree well with the proposed scaling across different network sizes $N$ and constitutive exponents $p$, as shown in Figs.~\ref{fig:hrr_scaling}b and \ref{fig:hrr_scaling}c, respectively.

Iterating Eq.~\eqref{eq:slope-ratio-inner-regime} gives the decay of the slopes within the inner discrete crack-tip regime:
\begin{equation}
{k_i}/{k_1}\approx i^{-1/p},
\qquad 1\leq i\leq n_0.
\end{equation}
Combining this relation with the constitutive law \eqref{eq:chain-constitutive-law} and the linear scaling \eqref{eq:linear_scaling}, we obtain the following layerwise scaling laws at a fixed macroscopic network stretch $\lambda$ in the regime of large deformation:
\begin{equation}
{\varepsilon_i}\sim i^{-1/p}, \qquad {f_i}\sim i^{-1}, \qquad 1\leq i\leq n_0.
\end{equation}
The strain profiles for different $p$ agree with this prediction over the first several layers (Figs.~\ref{fig:hrr_scaling}d and \ref{fig:hrr_scaling}e). Within this regime, increasing $p$ decreases the strain exponent $1/p$ and therefore weakens stretch localization. By contrast, the force exponent remains equal to one and is independent of $p$.

The different scaling law can be traced to continuum dimensional scaling \cite{chen2026thermodynamic}. In a homogeneous continuum, the energy release rate $G$ is energy per crack area, whereas the strain-energy density $W(r,\theta)$ is energy per unit volume. With $r$ as the only local length, dimensional consistency gives $W(r,\theta)\sim (G/r)\phi(\theta)$. For $\sigma\sim\varepsilon^p$, $W\sim\int_0^\varepsilon\sigma\,\mathrm{d}\varepsilon\sim\varepsilon^{p+1}$, giving $\varepsilon\sim r^{-1/(p+1)}$. A discrete network, however, introduces the microscopic length $l_0$. The continuum argument therefore requires a scale-separated region $l_0\ll r$, where $W$ averages over many chains. This condition fails within the first few lattice layers, where $r=O(l_0)$, and therefore cannot determine a local crack-tip field within this discrete core. There, the individual chain response and force balance give the new inner scaling $\varepsilon_i\sim i^{-1/p}$.

Moreover, whereas the continuum HRR theory defines a spatially continuous field with radial and angular dependence, the scaling in the discrete network is observed only along selected lattice directions. The predicted inner regime scaling is observed along the vertical discrete HRR line of the square lattice, whereas the horizontal crack-propagation line follows a different strain-decay profile (Fig.~\ref{fig:hrr_scaling}d and fig.~\ref{fig-supp:Figure_SI_7}).

Our two-regime scaling \eqref{eq:slope-ratio-inner-regime}-\eqref{eq:slope-ratio-outer-regime} also explains why the intrinsic fracture energy $\Gamma_0$ converges as the network size $N$ increases. For a square lattice with $N=2n$ vertical layers, the slopes satisfy the geometric height constraint $\frac{1}{2}k_1 + \sum_{i=2}^{n}k_i \approx n - \frac{1}{2}$.
Because $n_0$ remains of order unity as $N$ increases, the inner discrete regime contains only a finite number of chains. The size dependence is therefore controlled by the outer continuum HRR-like scaling:
\begin{equation}\label{eq:k0N_scaling}
    k_1 \sim N^{\frac{1}{p+1}}.
\end{equation}
This scaling is also numerically verified in Fig.~\ref{fig:hrr_scaling}f and fig.~\ref{fig-supp:Figure_SI_4}.

At crack initiation, the crack-tip chain reaches its critical stretch $\Lambda_f$. Eq.~\eqref{eq:linear_scaling} therefore gives
\begin{equation}\label{eq:lambdac}
    \lambda_c-1\approx\frac{\Lambda_f-1}{k_1} 
\end{equation}
For an unnotched square lattice subjected to the same pure-shear loading, $\Gamma_0$ scales as $N(\lambda_c-1)^{p+1}$. Assuming that $\Lambda_f$ is independent of $N$, Eqs.~\eqref{eq:k0N_scaling}-\eqref{eq:lambdac} give
\begin{equation}
\Gamma_0
\sim
N(\lambda_c-1)^{p+1}
\sim
N\left(N^{-1/(p+1)}\right)^{p+1}
=O(1).
\end{equation}
The increase in network size is therefore exactly offset, at the scaling level, by the decrease in the critical macroscopic deformation. This cancellation explains why $\Gamma_0$ converges to a size-independent limit. The detailed derivation is provided in Theorem~\ref{thm:Gamma0-convergence-scaling} in the appendix.

Triangular lattices exhibit the same two-regime scaling, although the relevant chain sequences are selected by different lattice topology. Along the direction $\theta=4\pi/3$, the numerical slope ratios agree with Eqs.~\eqref{eq:slope-ratio-inner-regime} and \eqref{eq:slope-ratio-outer-regime} for different $p$ (Fig.~\ref{fig:hrr_scaling}g). Comparisons along different directions $\theta=0$ and $\pi/3$ show that the predicted decay occurs only along selected direction (Fig.~\ref{fig:hrr_scaling}h). Along this discrete HRR line $\theta=4\pi/3$, the strains exhibit the same dependence on $p$ as those in square lattices (Fig.~\ref{fig:hrr_scaling}i). Our two-regime scaling is therefore not an artifact of vertical chain alignment in the square lattice, but a structure that persists across different lattice architectures.

\section{Experimental verification}\label{sec:experiment}

To experimentally validate our findings, we fabricate highly deformable, transparent polyacrylamide hydrogel networks with square and triangular lattice geometries. These networks not only sustain large deformations but also exhibit stretch-dependent color patterns under a circular polariscope, arising from stress-induced birefringence within their polymer networks\cite{liu2025fatigue}. This optical response makes these hydrogel networks suitable for directly visualizing the crack-tip field during loading. The fabrication procedure and the experimental setup are summarized in fig.~\ref{fig-supp:Exp_1}.

We first characterize the mechanical and photoelastic color responses of a single hydrogel fiber. The force-stretch relationship is nonlinear and it can be well described by a polynomial fit (Fig.~\ref{fig:experiment}a). Correspondingly, Fig.~\ref{fig:experiment}b shows the color changes of the hydrogel fiber with increasing stretch ratio, as captured under circularly polarized light. Fig.~\ref{fig:experiment}c shows a transparent undeformed square-lattice hydrogel network with 16 vertical layers which captured under white light. The deformed hydrogel networks at increasing stretch ratios of \(\lambda=1.5\), 2.0, and 2.5 are shown in Fig.~\ref{fig:experiment}d. Obviously, as the network stretch ratio increases, the chain at the crack tip exhibits the largest color change, while the deformation of the chains along the discrete HRR line decreases progressively with distance from the crack tip. We further compare the crack-tip fields of networks with different sizes, as shown in the top panels of Fig.~\ref{fig:experiment}d. At the same network stretch ratio, the crack-tip chain becomes more highly stretched as \(N\) increases, indicating stronger stretch concentration in larger networks.

We further quantify these observations by extracting the stretch ratio of each chain in the network from the captured images. Fig.~\ref{fig:experiment}e compares the measured stretch ratios of chains at layers \(i=1\), 2, and 3 along the discrete HRR line with simulations based on a polynomial fit to the force-stretch results of the hydrogel fiber. The experimental results agree closely with the simulations, and the chain stretch ratios increase approximately linearly with network deformation in the large-deformation regime. The results also confirm that the chain stretch decreases progressively with distance from the crack tip. Furthermore, we compare the crack-tip chain stretch ratios for networks of different sizes, for example the network layers is \(N=8\), 16, and 24, as shown in Fig.~\ref{fig:experiment}f. The quantitative results show that larger networks exhibit more pronounced stretch concentration at the crack tip, which is consistent with the observations in Fig.~\ref{fig:experiment}d. Additional polarized light images of square lattice networks with $N=8$ and $N=24$ are shown in fig.~\ref{fig-supp:Exp_Square}. Finally, in Fig.~\ref{fig:experiment}g, we fabricate triangular-lattice hydrogel networks and perform the same quantitative analysis. The extracted chain stretch ratios along the discrete HRR line of the triangular lattice also agree closely with the simulations.

\section{Discussion}\label{sec:discussion}

The main implication of this work is that lattice discreteness changes the structure of a crack-tip field. For a power-law chain response $f\sim\varepsilon^p$, the lattice layers near the crack tip exhibit the discrete scalings $\varepsilon_i\sim i^{-1/p}$ and $f_i\sim i^{-1}$, rather than the continuum HRR scalings $\varepsilon\sim r^{-1/(p+1)}$ and $\sigma\sim r^{-p/(p+1)}$. These discrete scaling laws are observed only along specific lattice directions, which we term the discrete HRR lines. These lines provide a discrete and topology-dependent description of the crack-tip field that is not captured by continuum theory.

The theory explains how the failure of a single chain at the crack tip is related to the fracture behavior of the entire network. In a larger network, deformation is more strongly concentrated at the crack tip, so less overall stretching is required to break the crack-tip chain. This reduction balances the additional elastic energy stored in the larger network, causing the intrinsic fracture energy to approach a size-independent limit. The theory therefore builds a connection between local chain failure and macroscopic fracture behavior.

The present theory is limited to ordered two-dimensional lattice networks under quasistatic pure-shear loading and focuses on crack initiation. Future work will extend the framework to more complex settings, including disordered and three-dimensional networks, more general chain interactions, and crack propagations.

\newpage
\begin{figure}[t]
    \centering
    \includegraphics[width=0.85\textwidth]{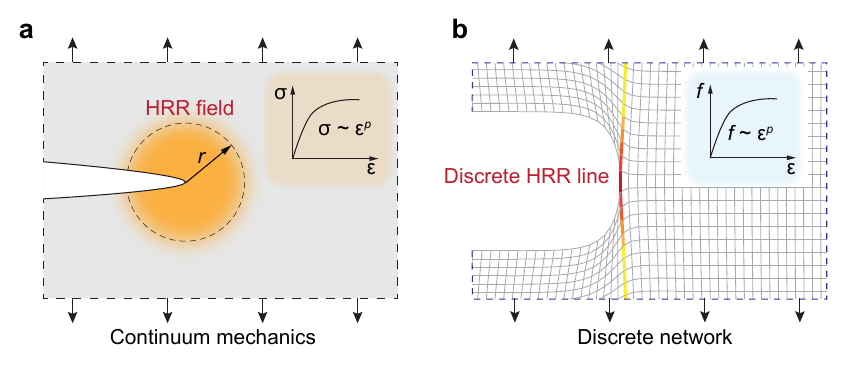}
    \caption{
    \textbf{From the continuum HRR field to a discrete HRR line.} 
    \textbf{(a)} A continuum material governed by the power-law constitutive relation $\sigma\sim\varepsilon^p$ develops a spatially continuous HRR field around the crack tip, where $r$ is the radial distance from the tip.
    \textbf{(b)} In a discrete lattice network with chain response $f\sim\varepsilon^p$, the chain strain and force exhibit layer-dependent scaling laws along certain lattice directions selected by the network topology. We refer to the chain family along such direction as a discrete HRR line. These scaling laws provide a discrete analogue of the continuum HRR field.}
    \label{fig:concept}
\end{figure}

\begin{figure}[t]
    \centering
    \includegraphics[width=\textwidth]{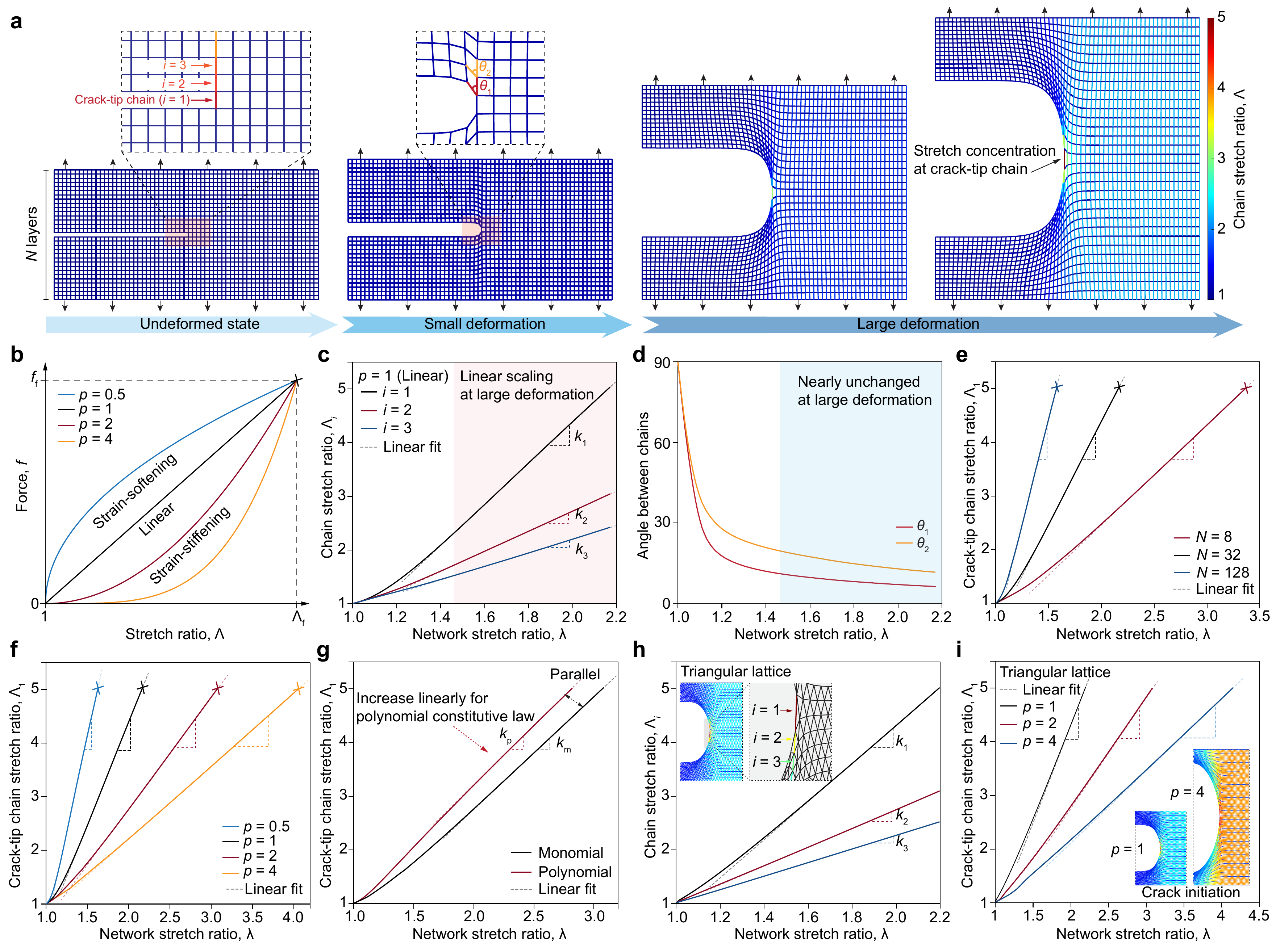}
    \caption{\textbf{Linear scaling of chain stretch ratio.} 
    \textbf{(a)} Demonstration of the pure-shear loading on a notched square lattice network from the undeformed state to the deformed state till the crack initiation. 
    \textbf{(b)} Monomial chain constitutive law, i.e., force-stretch relation $f=(\Lambda-1)^p$ for strain-softening ($p<1$), linear ($p=1$), and strain-stiffening ($p>1$) chains.
    \textbf{(c)} 
    At large deformation, the chain strains become approximately linear functions of the macroscopic network strain, $\Lambda_i-1\approx k_i(\lambda-1)$, where $k_i$ is the layer-dependent slope.
    \textbf{(d)} The local angles between representative chains become nearly unchanged at large deformation, providing a geometric explanation for the linear stretch scaling.
    \textbf{(e)} The linear slope increases with the network layer number $N$. 
    \textbf{(f)} The linear slope decreases with the nonlinearity exponent $p$. 
    \textbf{(g)} A polynomial chain constitutive law $f=(\Lambda-1)+(\Lambda-1)^2$ shows the same  linear scaling, with the leading nonlinear term controlling the slope $k_{\textrm{p}} \approx k_{\textrm{m}}$. \textbf{(h, i)} Similar linear scaling is observed in triangular lattices.}
    \label{fig:linear_scaling}
\end{figure}

\begin{figure}[t]
    \centering
    \includegraphics[width=\textwidth]{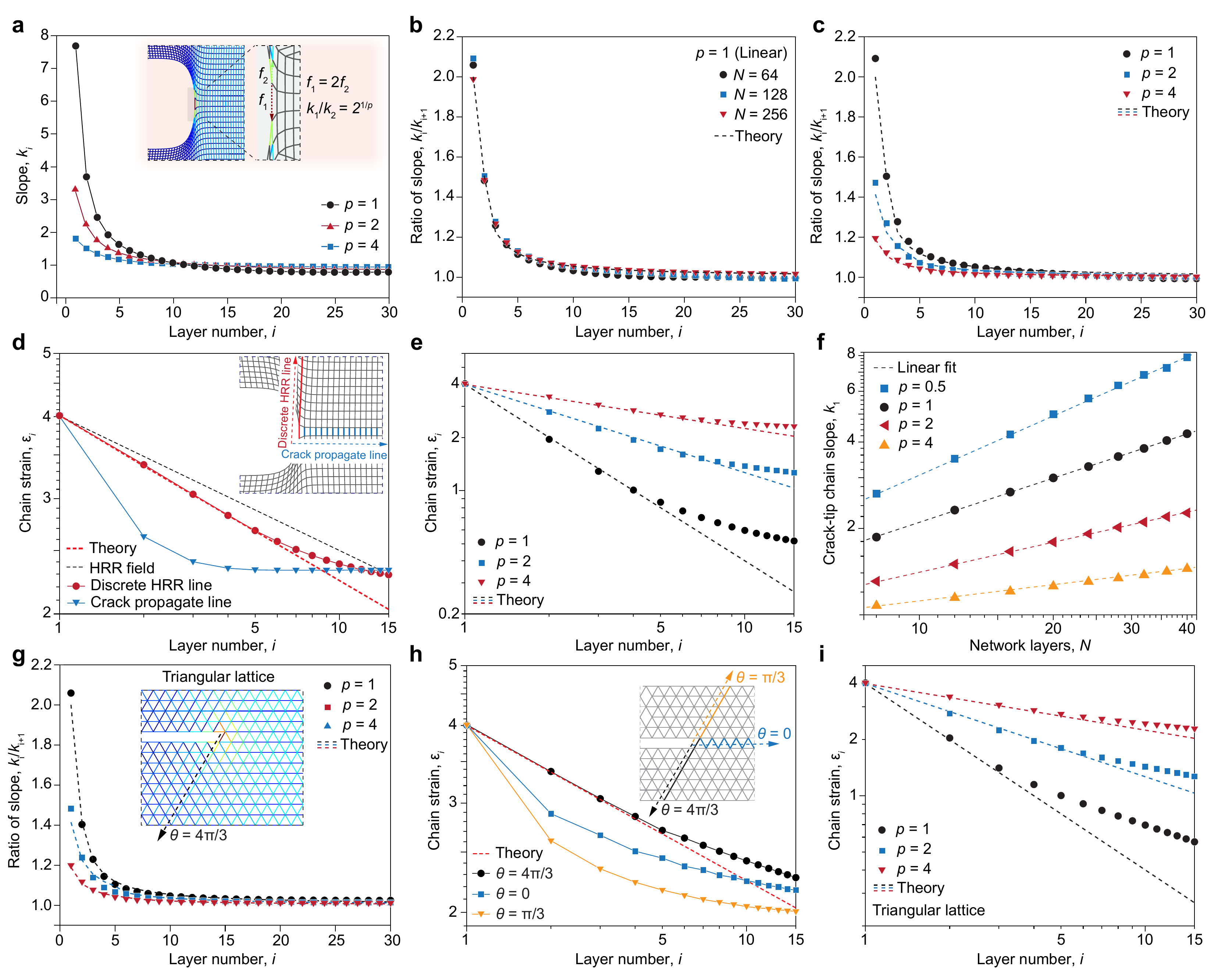}
    \caption{\textbf{Two-regime discrete HRR-like scaling in square and triangular lattices.}
    \textbf{(a)} The slopes $k_i$ decrease with layer number $i$, showing stretch concentration near the crack tip. The local force balance near the crack tip gives $f_1\approx2f_2$ and hence $k_1/k_2\approx2^{1/p}$.
    \textbf{(b, c)} The slope ratios $k_i/k_{i+1}$ for linear chains and different network sizes. The theory curves show the inner discrete scaling $(1+1/i)^{1/p}$ and outer HRR-like scaling $(1+1/i)^{1/(p+1)}$ in Eqs.~\eqref{eq:slope-ratio-inner-regime}-\eqref{eq:slope-ratio-outer-regime}.
    \textbf{(d)} Chain-strain decay along the vertical discrete HRR line and horizontal crack-propagation line.
    \textbf{(e)} Strain profiles for different $p$; dashed lines show the inner scaling $\varepsilon_i\sim i^{-1/p}$.
    \textbf{(f)} Network size dependence of the crack-tip slope, $k_1\sim N^{1/(p+1)}$.
    \textbf{(g-i)} Corresponding slope-ratio, directional, and strain-scaling results for triangular lattices, with $\theta=4\pi/3$ identifying the topology-selected discrete HRR line.
    }
    \label{fig:hrr_scaling}
\end{figure}

\begin{figure}[t]
    \centering
    \includegraphics[width=\textwidth]{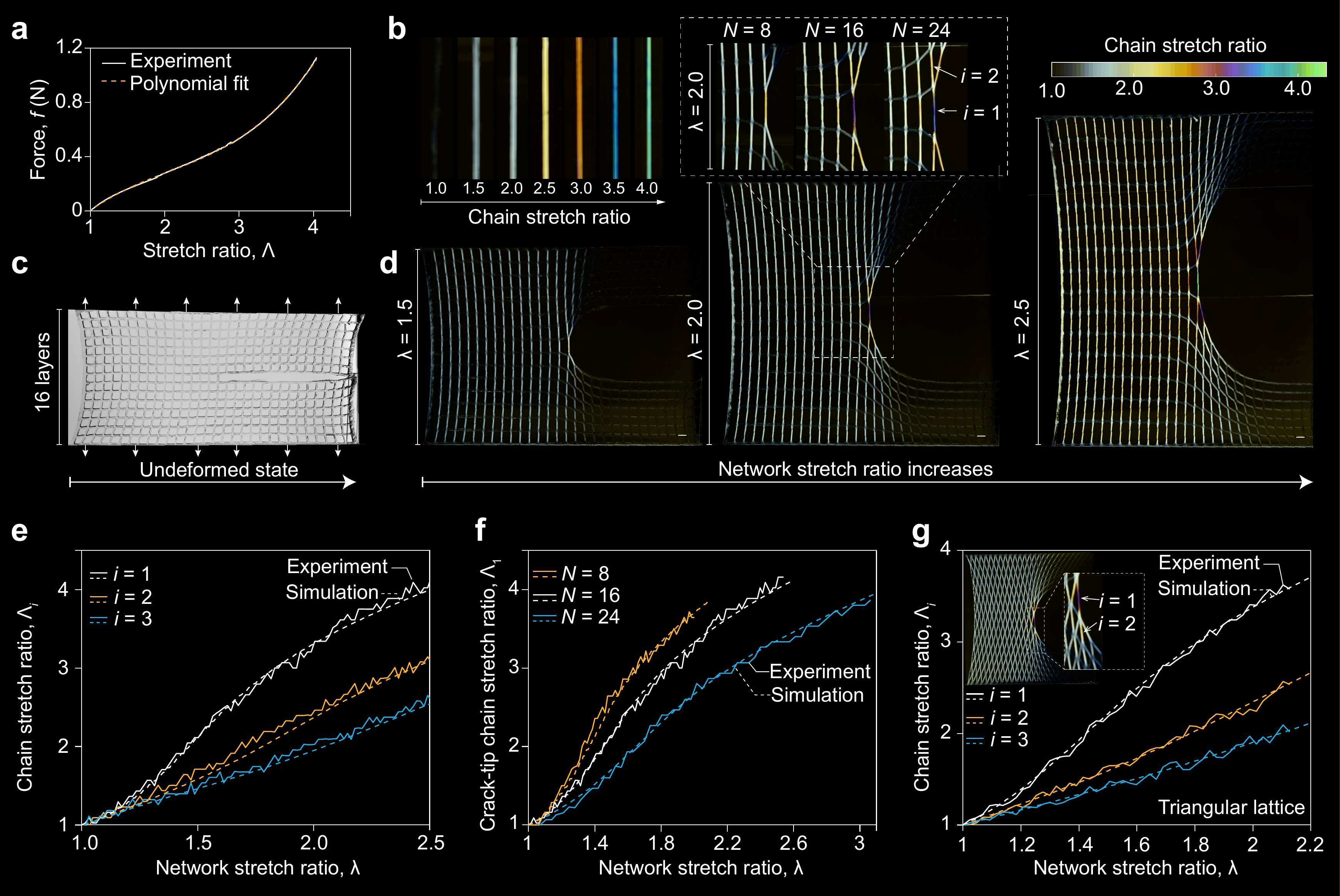}
    \caption{
    \textbf{Experimental validation using hydrogel networks.} \textbf{(a)} Force-stretch curve of a single hydrogel fiber. \textbf{(b)} Images of a single hydrogel fiber under polarized light at increasing stretch ratios. \textbf{(c)} Image of an undeformed square-lattice hydrogel network captured under white light. Scale bar: 10 mm. \textbf{(d)} Images of deformed networks captured under polarized light at stretch ratios $\lambda=1.5$, $2.0$, and $2.5$. The upper insets show the crack-tip fibers at the same stretch ratio, $\lambda=2.0$, for networks with $N=8$, 16, and 24 layers. \textbf{(e)} Chain stretch ratio versus network stretch ratio for chains along the discrete HRR line in a square-lattice hydrogel network. \textbf{(f)} Crack-tip chain stretch ratio versus network stretch ratio for square-lattice hydrogel networks with $N=8$, 16, and 24 layers. \textbf{(g)} Chain stretch ratio versus network stretch ratio for chains along the discrete HRR line in a triangular-lattice hydrogel network. Experimental measurements are compared with simulations in (e-g).
    }
    \label{fig:experiment}
\end{figure}

\clearpage %

\bibliography{ref} %
\bibliographystyle{sciencemag}

\section*{Acknowledgments}

\paragraph*{Funding:}
J.H. was partially funded by the National Science Foundation (DMS-2618114) and the startup fund of the College of Arts and Sciences at University of Delaware. J.L. and S.L. was partially funded by the National Science Foundation (CMMI-2338747), the National Science Foundation (CMMI-2423067), and the startup fund of the College of Engineering at  Michigan State University. Support for this research was also provided by the University of Wisconsin-Madison, Office of the Vice Chancellor for Research with funding from the Wisconsin Alumni Research Foundation.
\paragraph*{Author contributions:}
J.L., S.L. and J.H. conceived the idea. J.L. and J.H. conducted the numerical simulations. J.L. conducted the experiments and prepared the figures. J.L., S.L. and J.H. interpreted the results and wrote the manuscript. S.L. and J.H. acquired the funding and supervised the project.
\paragraph*{Competing interests:}
There are no competing interests to declare.
\paragraph*{Data and materials availability:}
The code used for numerical simulations will be made publicly accessible upon publication of the paper.

\subsection*{Supplementary materials}
Materials and Methods\\
Theory\\
Figures S1 to S7\\

\newpage

\renewcommand{\thefigure}{S\arabic{figure}}
\renewcommand{\thetable}{S\arabic{table}}
\renewcommand{\theequation}{S\arabic{equation}}
\renewcommand{\thepage}{S\arabic{page}}
\renewcommand{\thetheorem}{S\arabic{theorem}}
\renewcommand{\thedefinition}{S\arabic{definition}}
\renewcommand{\thealgorithm}{S\arabic{algorithm}}
\renewcommand{\theremark}{S\arabic{remark}}
\renewcommand{\theassumption}{S\arabic{assumption}}
\renewcommand{\thelemma}{S\arabic{lemma}}
\renewcommand{\thecorollary}{S\arabic{corollary}}
\setcounter{figure}{0}
\setcounter{table}{0}
\setcounter{equation}{0}
\setcounter{theorem}{0}
\setcounter{definition}{0}
\setcounter{page}{1} %
\renewcommand{\thesection}{S\arabic{section}}
\renewcommand{\thesubsection}{S\arabic{section}.\arabic{subsection}}

\begin{center}
\section*{Supplementary Materials for\\ \scititle}

Jiabin~Liu,
Shaoting~Lin$^{\ast}$,
Juntao~Huang$^{\ast}$
\\ %
\small$^\ast$Corresponding authors. Email: slin88@wisc.edu; huangjt@udel.edu\\
\end{center}

\subsubsection*{This PDF file includes:} 
Materials and Methods\\
Supplementary Text\\
Figures S1 to S7\\
References

\newpage

\section*{Materials and Methods}
\textbf{Materials}

\noindent Acrylamide (AAm, Sigma-Aldrich A8887), N,N$'$-Methylenebisacrylamide (MBAA, Sigma-Aldrich 146072), ammonium persulfate (APS, Sigma-Aldrich A3678), and N,N,N$'$,N$'$-tetra\-methyl\-ethylene\-diamine(TEMED, Sigma-Aldrich T9281) used in this work were purchased from Sigma-Aldrich and used without modification. Transparent acrylic sheets (8560K191 and 8560K171) used to fabricate hydrogel molds and conduct mechanical tests were purchased from McMaster-Carr. The linear polarizing film (XP44-40) and $\lambda/4$ retarder film (WP140HE) used in the homemade photoelastic setup were purchased from Edmund Optics. For the homemade photoelastic testing setup, a white LED panel light with a color temperature of 4,000 K was purchased from The Home Depot, and a Nikon D800 camera was used to capture images. 

\vspace{1em} 
\noindent\textbf{Synthesis of Hydrogel Network}

\noindent Before synthesizing the hydrogel networks, we printed the network molds by PLA using a 3D printer (Bambu Lab A1). The PAAm pre-gel solution was prepared by dissolving acrylamide (AAm) monomer in deionized water at a monomer-to-water ratio of 3:1. A 0.1 M ammonium persulfate (APS) solution was used as the thermal initiator, 0.23 wt\% N,N$'$-methylenebisacrylamide (MBAA) was used as the crosslinker, and N,N,N$'$,N$'$-tetramethyleth
ylenediamine (TEMED) was used as the polymerization accelerator. For each 10 g batch of PAAm pre-gel solution, 200 $\mu$L of MBAA solution, 150 $\mu$L of APS solution, and 10 $\mu$L of TEMED were added. The mixture was vortexed for 2 min and then degassed in a vacuum chamber for an additional 2 min. The precursor solution was carefully poured into the 3D-printed molds while avoiding the formation of air bubbles, and the molds were covered with an acrylic sheet. The hydrogel networks were cured for 3 h under ambient conditions. After curing, the networks were immersed in deionized water until swelling equilibrium was reached.

\vspace{1em} 
\noindent\textbf{The Circular Polariscope}

\noindent The circular polariscope is a photoelastic setup used to directly visualize deformed birefringent materials.  The setup consists of an LED light panel (4000\,K), two linear polarizers, two quarter-wave plates, a universal testing machine, and a camera. The first polarizer and quarter-wave plate (collectively referred to as the polarizer) are placed before the sample, while the second set (the analyzer) is placed after. The two linear polarizers are installed orthogonally to each other to create the darkest possible observed light field. The first quarter-wave plate is installed at a 45-degree angle, and the second quarter-wave plate is orthogonal to the first to darken the observed light field again. 

\vspace{1em} 
\noindent\textbf{Mechanical Tests of Hydrogel Network}

\noindent We first tested the force–stretch relationship of individual hydrogel fibers.The hydrogel fiber was cut to a length of 20 mm and subjected to uniaxial tensile loading. The tests were conducted under a custom-built circular polariscope, and the photoelastic images were captured by the camera at one-second intervals. Before mechanical testing, the top and bottom of each hydrogel network were glued to acrylic sheets using superglue. Before testing, the samples were briefly immersed in water to ensure that they remained fully swollen. Then each sample was then stretched at a constant rate of 1 mm/s by the universal testing machine. Photoelastic images were captured through the circular polariscope at 1 s intervals, while the force was simultaneously recorded by the mechanical testing machine.

\section*{Theory}

\begin{theorem}[Fixed angle force-strain ratios]\label{thm:fixed-angle-force-strain-ratios}
Consider a node $v$ connected to $m\ge 2$ springs. For each  spring $i=1,\dots,m$, let $\bm n_i\in\mathbb R^d$ be its unit direction vector pointing away from $v$, let $f_i$ be the force magnitude, and let $\Lambda_i$ be its stretch ratio. If the directions are unchanged during loading (i.e.\ $\bm n_i$ are constant) and the direction vectors satisfy
\begin{equation}\label{eq:kernel1d}
\dim\ker A = 1,\qquad A:=\big[\bm n_1\ \bm n_2\ \cdots\ \bm n_m\big]\in\mathbb R^{d\times m}.
\end{equation}
then the force ratios are constant during loading:
\begin{equation}\label{eq:force-ratio-const}
\frac{f_i}{f_j}=\text{const},\qquad \forall i,j.
\end{equation}
Moreover, if each spring satisfies the monomial constitutive law
\begin{equation}\label{eq:mono-law-restate}
  f(\Lambda)=(\Lambda-1)^p,\qquad p>0,
\end{equation}
then the strains $\varepsilon_i:=\Lambda_i-1$ also satisfy constant proportionality ratios during loading:
\begin{equation}\label{eq:strain-ratio-const}
  \frac{\varepsilon_i}{\varepsilon_j}=\text{const},\qquad \text{for all }i,j,
\end{equation}
\end{theorem}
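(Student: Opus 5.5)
The argument is linear algebra at the node $v$, followed by inverting the monomial constitutive law. We write the equilibrium of $v$ as a single vector identity. Each spring pulls $v$ along its own direction with tension $f_i$, so the net nodal force vanishes:
\begin{equation*}
\sum_{i=1}^m f_i\,\bm n_i=\bm 0,\qquad\text{i.e.}\qquad A\bm f=\bm 0,\quad \bm f:=(f_1,\dots,f_m)^{\mathsf T}.
\end{equation*}
We assume, as is implicit in the statement, that $v$ is a free (interior) node with no external load. Under this assumption, at every loading step the force vector $\bm f$ lies in $\ker A$.

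Since the directions $\bm n_i$ are constant, the matrix $A$ does not change during loading. Hypothesis \eqref{eq:kernel1d} then says $\ker A=\mathrm{span}\{\bm w\}$ for a single fixed vector $\bm w\in\mathbb R^m$. Consequently $\bm f=c\,\bm w$ for some scalar $c$, which may depend on the loading step. It follows that $f_i/f_j=w_i/w_j$ whenever $w_j\neq 0$, and this ratio is independent of the load, which proves \eqref{eq:force-ratio-const}.

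Two degenerate situations need care:
\begin{itemize}
\item If some $w_j=0$, the corresponding spring carries zero force throughout. The ratio statement should then be read for the nonzero components, or equivalently in the cross-multiplied form $f_i w_j=f_j w_i$.
\item At the trivial state $c=0$ all forces vanish. The claim is therefore meant for loaded states, $c\neq 0$.
\end{itemize}
Because springs carry tensile forces $f_i\ge 0$, we may normalize $\bm w$ to have nonnegative entries, so that $c\ge 0$.

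For the strain statement we invert \eqref{eq:mono-law-restate}. For tensile springs, $\varepsilon_i=f_i^{1/p}$. Hence
\begin{equation*}
\frac{\varepsilon_i}{\varepsilon_j}=\Big(\frac{f_i}{f_j}\Big)^{1/p}=\Big(\frac{w_i}{w_j}\Big)^{1/p},
\end{equation*}
which is again independent of the loading step and proves \eqref{eq:strain-ratio-const}.

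The only delicate point is the sign and invertibility convention for the constitutive law. We expect this to be the main obstacle, because $(\Lambda-1)^p$ is only naturally defined and monotone for $\Lambda\ge 1$. We will therefore state explicitly that all springs at $v$ are in tension (or extend the law oddly, $f=\mathrm{sgn}(\varepsilon)|\varepsilon|^p$). Under either convention $\varepsilon=\mathrm{sgn}(f)|f|^{1/p}$, and the ratio computation above goes through verbatim, with $c^{1/p}$ cancelling between numerator and denominator.
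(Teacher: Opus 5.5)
Your proposal is correct and follows essentially the same route as the paper: write nodal equilibrium as $A\bm f=\bm 0$, use the one-dimensional kernel of the constant matrix $A$ to get $\bm f=c\,\bm w$ with fixed $\bm w$, then invert $f=\varepsilon^p$ to transfer the constant ratios to the strains. Your added care about zero components, the unloaded state $c=0$, and the tension/sign convention is a harmless refinement of the paper's one-line restriction to $f_j^\star\neq 0$.
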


\begin{proof}
We first prove the constant force ratios \eqref{eq:force-ratio-const}. With $A$ defined in \eqref{eq:kernel1d}, the force equilibrium condition is the homogeneous linear system
\[
A\,\bm f=\bm 0,\qquad \bm f:=(f_1,\dots,f_m)^\top.
\]
Since the directions $\bm n_i$ are unchanged during loading, the matrix $A$ is constant.
By \eqref{eq:kernel1d}, $\ker A$ is one-dimensional. Hence there exists a nonzero vector $\bm f^\star\in\ker A$ such that any solution of $A \bm f=\bm 0$ must be of the form
\[
\bm f = \alpha\,\bm f^\star
\]
for some scalar $\alpha$. Therefore, for any $i,j$ with
$f_j^\star\neq 0$,
\[
\frac{f_i}{f_j}=\frac{\alpha f_i^\star}{\alpha f_j^\star}=\frac{f_i^\star}{f_j^\star},
\]
which is independent of the loading. This proves \eqref{eq:force-ratio-const}.

Then we prove the constant strain ratios under the monomial law. By \eqref{eq:mono-law-restate}, $f_i = (\Lambda_i-1)^p = \varepsilon_i^p$. For any $i,j$,
\[
\frac{f_i}{f_j}=\left(\frac{\varepsilon_i}{\varepsilon_j}\right)^p.
\]
Since we have proved $f_i/f_j$ is constant during loading, it follows that
\[
\frac{\varepsilon_i}{\varepsilon_j}=\left(\frac{f_i}{f_j}\right)^{1/p}
\]
is also constant during loading, proving \eqref{eq:strain-ratio-const}.

\end{proof}

\begin{theorem}[Crack-tip slope scaling and convergence of $\Gamma_0$]\label{thm:Gamma0-convergence-scaling}
Fix $p>0$ and an integer $n_0\geq 1$ that is independent of the network size. Consider a square lattice with $N=2n$ vertical layers, where $n\geq n_0+1$. Assume that:
\begin{enumerate}
  \item[(i)] For each layer $i=1,2,\dots,n$, the chain stretch satisfies
  \begin{equation}
    \Lambda_i-1=k_i(\lambda-1),
  \end{equation}
  where $k_i>0$ is independent of $\lambda$.
  \item[(ii)] The idealized two-regime slope ratios corresponding to Eqs.~\eqref{eq:slope-ratio-inner-regime} and \eqref{eq:slope-ratio-outer-regime} are
  \begin{equation}\label{eq:two-regime-slope-ratios-appendix}
    \frac{k_i}{k_{i+1}}
    =
    \begin{cases}
      \displaystyle\left(1+\frac{1}{i}\right)^{1/p},
      &1\leq i\leq n_0,\\[6pt]
      \displaystyle\left(1+\frac{1}{i}\right)^{1/(p+1)},
      &n_0<i\leq n-1.
    \end{cases}
  \end{equation}
  \item[(iii)] The slopes satisfy the geometric height constraint
  \begin{equation}\label{eq:height-constraint-appendix}
    \frac{1}{2}k_1+\sum_{i=2}^{n}k_i=n-\frac{1}{2}.
  \end{equation}
\end{enumerate}
Then, as $N\to\infty$ with $n_0$ fixed, the crack-tip slope satisfies
\begin{equation}\label{eq:crack-tip-slope-two-regime-appendix}
  k_1
  \sim
  N^{1/(p+1)}.
\end{equation}
If crack initiation occurs at a fixed crack-tip chain failure stretch ratio $\Lambda_f$, then $\Gamma_0$ converges to a size-independent limit.
\end{theorem}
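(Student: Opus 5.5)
The plan is to first solve the recursion (ii) in closed form, then insert it into the height constraint (iii) to find $k_1$, and finally turn this into a statement about $\Gamma_0$.

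\textbf{Step 1: closed form for $k_i$.} Since $\prod_{j=1}^{i-1}(1+1/j)=i$, telescoping (ii) gives
\[
k_i = k_1\, i^{-1/p}\quad (1\le i\le n_0+1),
\]
\[
k_i = k_1\,(n_0+1)^{-1/p}\left(\frac{i}{n_0+1}\right)^{-1/(p+1)}\quad (n_0+1\le i\le n).
\]
Thus $k_i=k_1 c_0\, i^{-1/(p+1)}$ for $i>n_0$, where $c_0=(n_0+1)^{1/(p+1)-1/p}$ depends only on $p$ and $n_0$.

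\textbf{Step 2: the height constraint.} Writing $k_i=k_1 a_i$ with $a_i$ independent of $k_1$, constraint (iii) becomes
\[
k_1\Big(\tfrac12+\sum_{i=2}^{n}a_i\Big)=n-\tfrac12 .
\]
The inner sum $\sum_{i=2}^{n_0}a_i$ is bounded independently of $n$. The outer sum can be compared with an integral, using monotonicity of $x^{-1/(p+1)}$:
\[
\sum_{i=n_0+1}^{n} i^{-1/(p+1)} = \frac{p+1}{p}\,n^{p/(p+1)} + O(1),
\]
where the $O(1)$ depends only on $n_0$ and $p$. This is valid because $0<1/(p+1)<1$, so the sum diverges like $n^{p/(p+1)}$.

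Hence the bracket equals $c_0\frac{p+1}{p}n^{p/(p+1)}(1+o(1))$, and therefore
\[
k_1=\frac{p}{(p+1)c_0}\,n^{1/(p+1)}(1+o(1)).
\]
With $n=N/2$ this gives $k_1\sim N^{1/(p+1)}$, and in fact $k_1 N^{-1/(p+1)}$ tends to an explicit constant. That sharper limit is what the convergence claim needs, rather than mere two-sided bounds.

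\textbf{Step 3: $\Gamma_0$.} By (i) applied at the tip, $\lambda_c-1=(\Lambda_f-1)/k_1$. For the unnotched pure-shear sample, I would model it as $N$ vertical layers of aligned chains with the horizontal chains unstretched. Each vertical chain then has strain $\lambda-1$, so $s\propto(\lambda-1)^p$ per unit reference width, and the width factor is absorbed in the normalization per crack length. Integrating $s\,dh$ with $dh=h_0\,d\lambda$ and $h_0\propto N l_0$ gives
\[
\Gamma_0=C\,N(\lambda_c-1)^{p+1},
\]
where $C=l_0/(p+1)$ up to a lattice-geometry constant. Substituting Step 2,
\[
\Gamma_0=C(\Lambda_f-1)^{p+1}\,N\,k_1^{-(p+1)}\;\longrightarrow\; C(\Lambda_f-1)^{p+1}\,2\left(\frac{(p+1)c_0}{p}\right)^{p+1},
\]
which is a size-independent limit.

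\textbf{Main obstacle.} The delicate points are, first, making the unnotched-sample expression for $\Gamma_0$ precise, which requires stating the idealized homogeneous deformation explicitly. Second, the asymptotics must be uniform enough to give an actual limit of $N k_1^{-(p+1)}$, not just $\Theta(1)$. I would handle the latter with the explicit integral comparison bounds
\[
\int_{n_0+1}^{n+1}x^{-q}\,dx\le\sum_{i=n_0+1}^{n} i^{-q}\le (n_0+1)^{-q}+\int_{n_0+1}^{n}x^{-q}\,dx ,\qquad q=\frac{1}{p+1}.
\]
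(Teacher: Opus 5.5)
Your proposal is correct and follows essentially the same route as the paper: you telescope (ii) to get $k_i=k_1 i^{-1/p}$ and $k_i=k_1(n_0+1)^{-1/(p(p+1))}i^{-1/(p+1)}$, insert these into (iii) using $\sum_{i=n_0+1}^{n}i^{-1/(p+1)}=\frac{p+1}{p}n^{p/(p+1)}+O(1)$, and then combine $\Gamma_0\propto N(\lambda_c-1)^{p+1}$ with $\lambda_c-1=(\Lambda_f-1)/k_1$. The only differences are cosmetic: the paper uses the prefactor $(N-1)/(p+1)$, since (iii) implies $h_0=(N-1)l_0$, where you use $Nl_0/(p+1)$, which leaves the limit unchanged; and your explicit leading constant for $k_1N^{-1/(p+1)}$ makes precise that convergence of $\Gamma_0$ needs an asymptotic equivalence rather than mere two-sided bounds.
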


\begin{proof}
  Iterating the inner-regime ratios in \eqref{eq:two-regime-slope-ratios-appendix} gives
  \begin{equation}\label{eq:inner-slope-recursion-appendix}
    k_i=k_1 i^{-1/p},
    \qquad 1\leq i\leq n_0+1.
  \end{equation}
  For $i\geq n_0+1$, iteration through the inner regime and then through the outer regime gives
  \begin{equation}
    \frac{k_1}{k_i}
    =
    \prod_{j=1}^{n_0}\left(1+\frac{1}{j}\right)^{1/p}
    \prod_{j=n_0+1}^{i-1}\left(1+\frac{1}{j}\right)^{1/(p+1)}
    =
    (n_0+1)^{1/p}\left(\frac{i}{n_0+1}\right)^{1/(p+1)}.
  \end{equation}
  Therefore,
  \begin{equation}\label{eq:outer-slope-recursion-appendix}
    k_i=k_1(n_0+1)^{-1/(p(p+1))}i^{-1/(p+1)},
    \qquad n_0+1\leq i\leq n.
  \end{equation}

  To determine the size dependence of $k_1$, substitute \eqref{eq:inner-slope-recursion-appendix} and \eqref{eq:outer-slope-recursion-appendix} into the height constraint \eqref{eq:height-constraint-appendix}. This yields
  \begin{equation}\label{eq:k1-exact-two-regime-appendix}
    k_1
    =
    \frac{n-\frac{1}{2}}{
      \displaystyle
      \frac{1}{2}
      +\sum_{i=2}^{n_0}i^{-1/p}
      +(n_0+1)^{-1/(p(p+1))}
      \sum_{i=n_0+1}^{n}i^{-1/(p+1)}}.
  \end{equation}
  Because $n_0$ is independent of $n$, the inner-regime contribution remains bounded. The outer-regime sum satisfies
  \begin{equation}
    \sum_{i=n_0+1}^{n}i^{-1/(p+1)}
    =\frac{p+1}{p}n^{p/(p+1)}+O(1).
  \end{equation}
  Substitution into \eqref{eq:k1-exact-two-regime-appendix} proves \eqref{eq:crack-tip-slope-two-regime-appendix}.

  At crack initiation, the linear stretch relation for the crack-tip chain gives
  \begin{equation}
    \Lambda_f-1=k_1(\lambda_c-1),
  \end{equation}
  which implies that
  \begin{equation}
    \lambda_c-1=\frac{\Lambda_f-1}{k_1}.
  \end{equation}
  Hence
  \begin{equation}
    \Gamma_0
    =\frac{N-1}{p+1}(\Lambda_f-1)^{p+1}k_1^{-(p+1)}.
  \end{equation}
  Substitution of \eqref{eq:crack-tip-slope-two-regime-appendix} gives the size-independent limit of $\Gamma_0$ as $N\to\infty$.
\end{proof}

\vspace*{0pt}
\begin{figure}[H]
  \centering
  \includegraphics[trim={0cm 0cm 0cm 0cm},clip, width=1.0\textwidth]{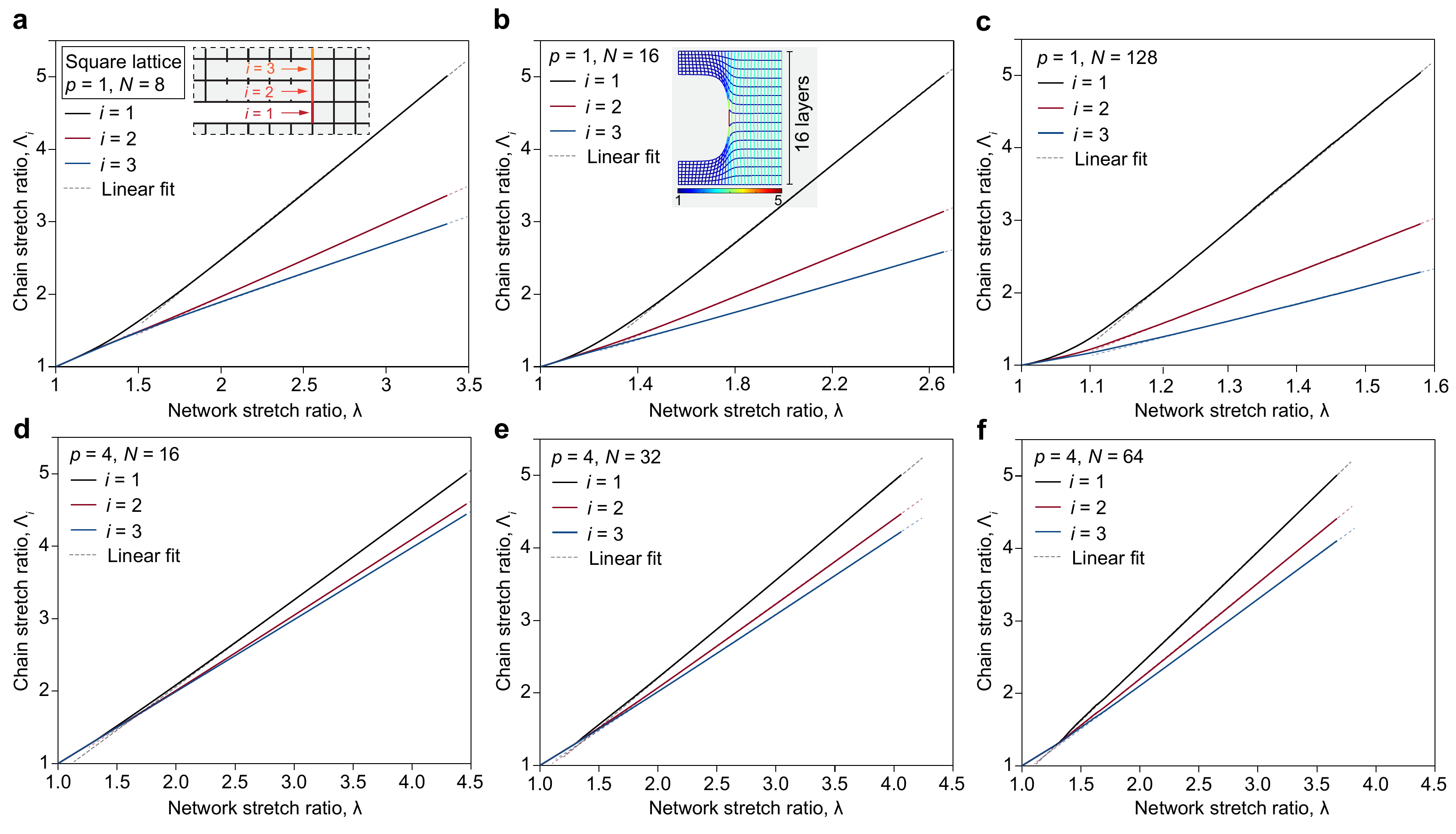}
    \caption[\textbf{Simulation results for chain stretch ratio versus network stretch ratio in a square lattice network.}]
    {\textbf{Simulation results for chain stretch ratio versus network stretch ratio in a square lattice network.} \textbf{(a)} Square lattice network with network layers of $N = 8$ and linear chain $p = 1$. \textbf{(b)} Network with $N = 16$ and linear chain. \textbf{(c)} Network with $N = 128$ and linear chain. \textbf{(d)} Network with $N = 16$ and non-linear chain $p = 4$. \textbf{(e)} Network with $N = 32$ and $p = 4$.\textbf{(f)} Network with $N = 64$ and $p = 4$.
  }
  \label{fig-supp:Figure_SI_1}
\end{figure}

\vspace*{0pt}
\begin{figure}[H]
  \centering
  \includegraphics[trim={0cm 0cm 0cm 0cm},clip, width=0.9\textwidth]{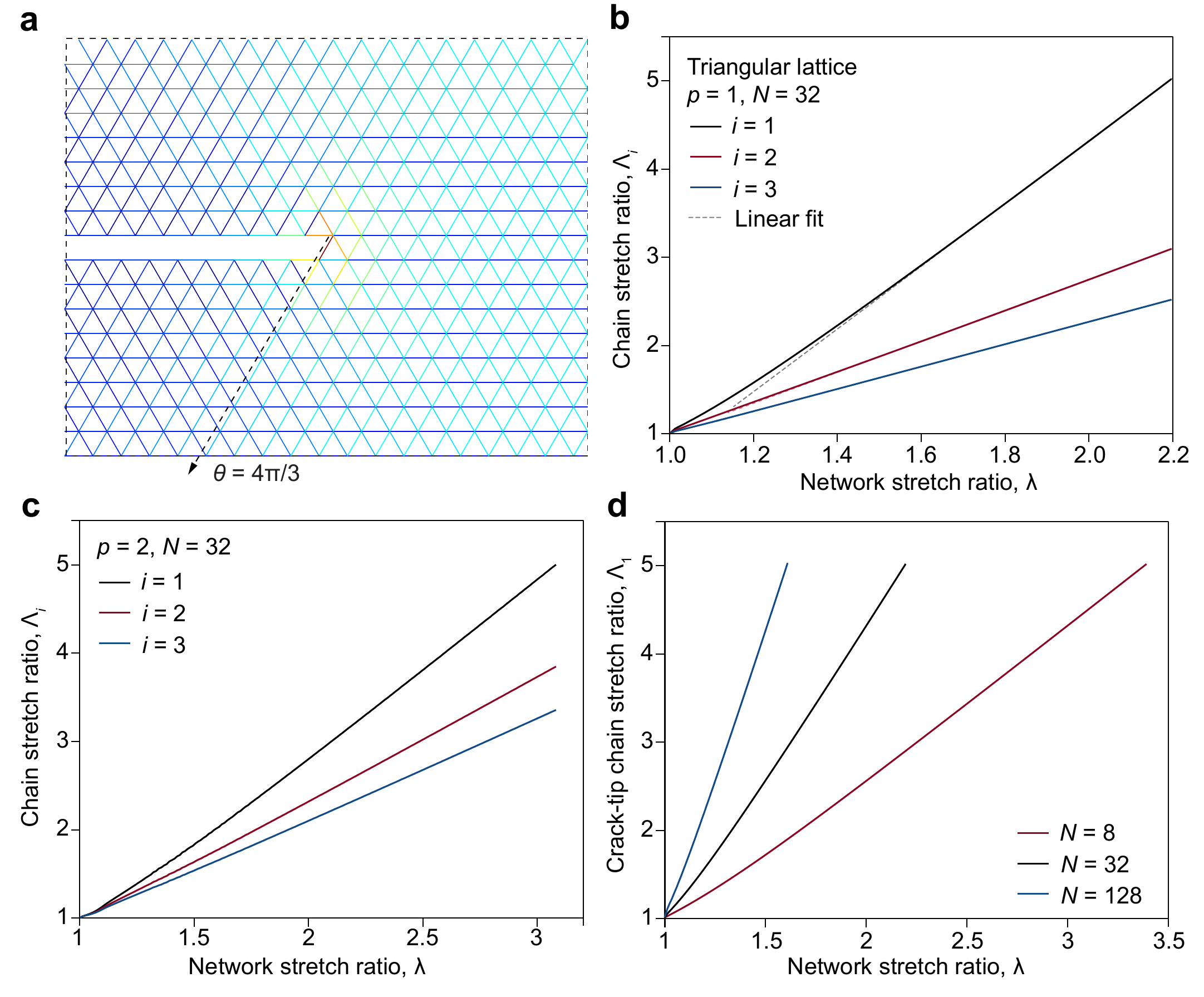}
  \caption[\textbf{Simulation results for chain stretch ratio versus network stretch ratio in a triangular lattice network.}]
  {\textbf{Simulation results for chain stretch ratio versus network stretch ratio in a triangular lattice network.} \textbf{(a)} A deformed triangular lattice network shown in undeformed configuration. \textbf{(b)} Network with network layers of $N = 32$ and linear chain $p = 1$. \textbf{(c)} Network with network layers of $N = 32$ and nonlinear chain $p = 2$. \textbf{(d)} Crack-tip chain stretch ratio versus network stretch ratio with different layers $N = 8$, 32, 128 and $p=1$.
  }
  \label{fig-supp:Figure_SI_2}
\end{figure}

\vspace*{0pt}
\begin{figure}[H]
  \centering
  \includegraphics[trim={0cm 0cm 0cm 0cm},clip, width=0.9\textwidth]{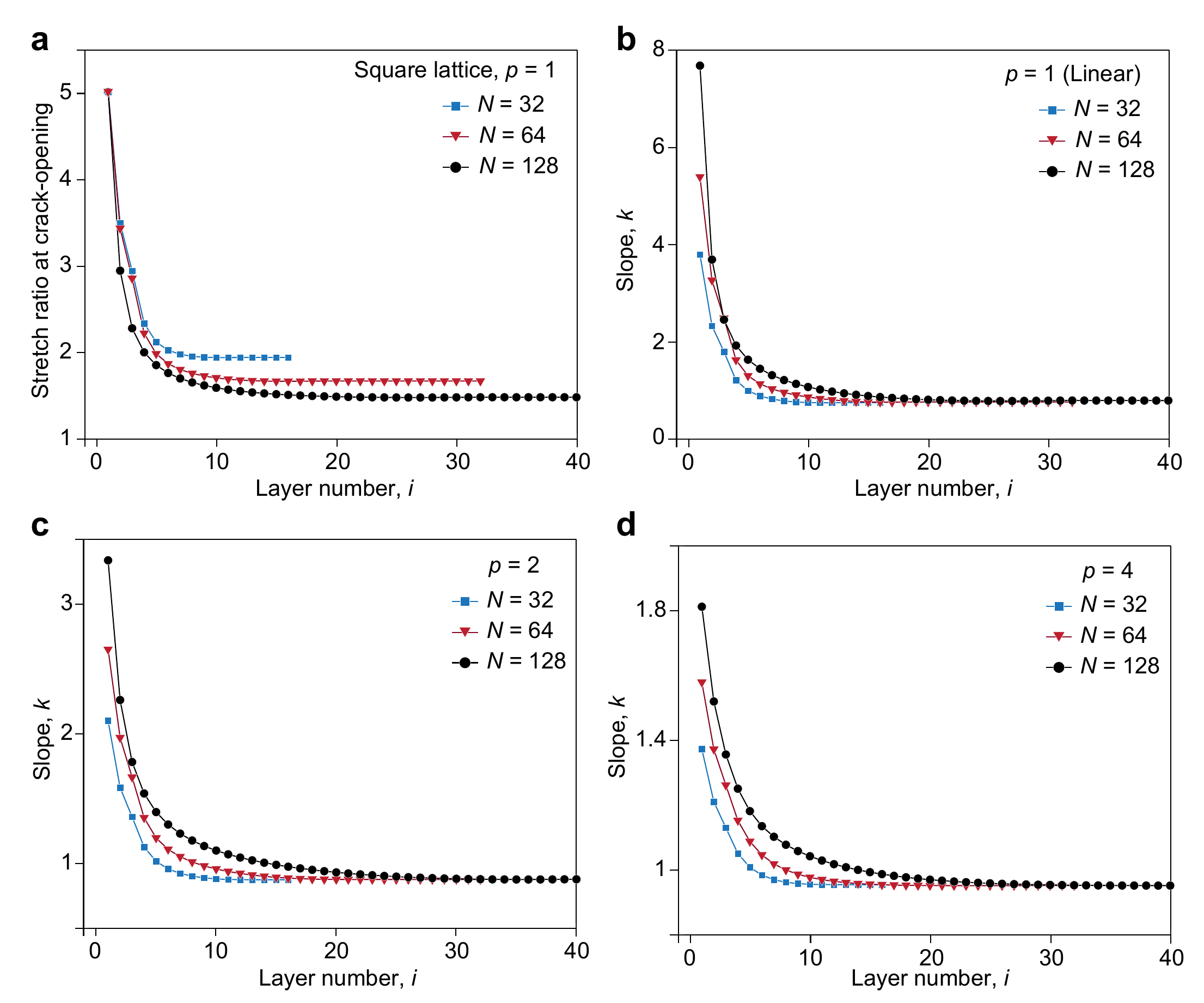}
  \caption[\textbf{Layer-wise slope of square lattice network along discrete HRR line.}]
  {\textbf{Layer-wise slope of square lattice network along discrete HRR line.} \textbf{(a)} Stretch ratio versus layer number when crack initialization. \textbf{(b)} The slopes $k$ versus layer number $i$ of network with linear chain. \textbf{(c)} The slopes $k$ versus layer number $i$ of network with non-linear chain $p = 2$. \textbf{(d)} The slopes $k$ versus layer number $i$ of network with non-linear chain $p = 4$.
  }
  \label{fig-supp:Figure_SI_3}
\end{figure}

\vspace*{0pt}
\begin{figure}[H]
  \centering
  \includegraphics[trim={0cm 0cm 0cm 0cm},clip, width=0.9\textwidth]{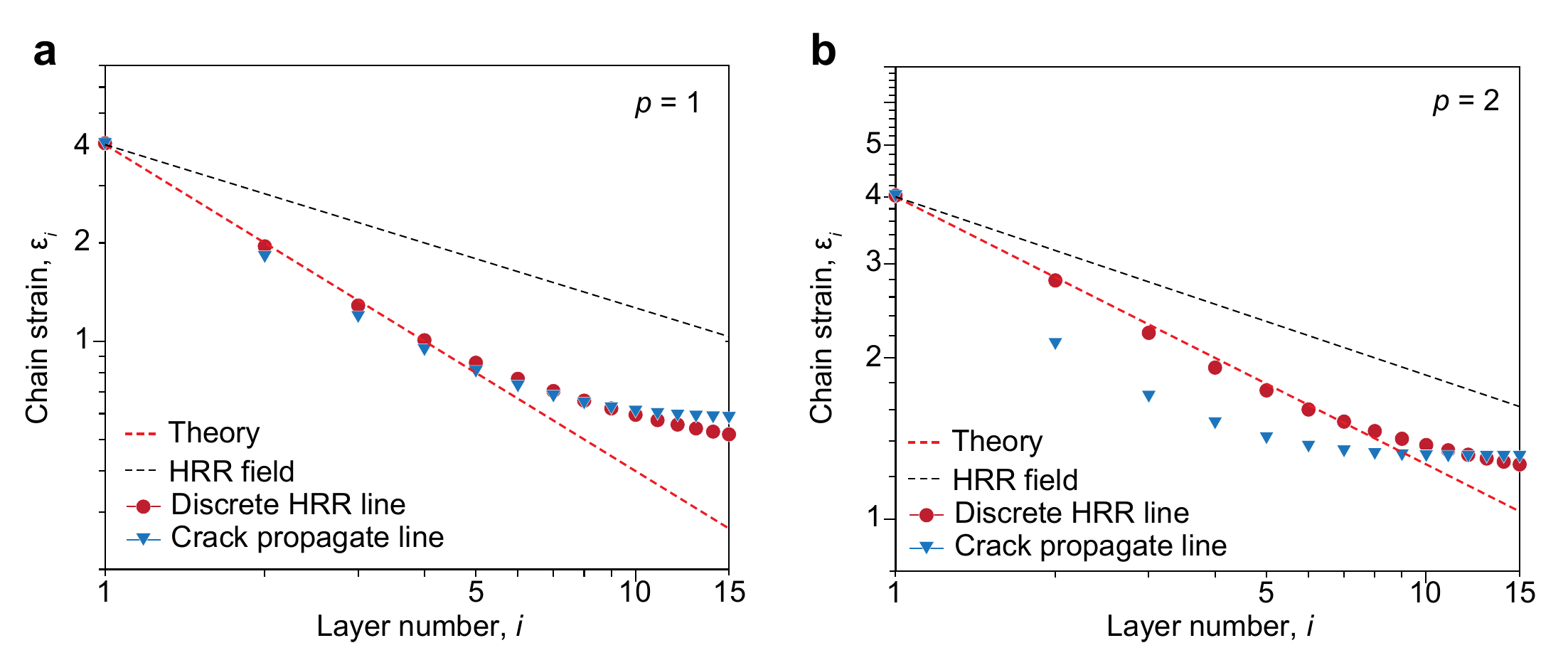}
  \caption[\textbf{Chain-strain decay along the vertical discrete HRR line and horizontal crack-propagation line of square lattice. }]
  {\textbf{Chain-strain decay along the vertical discrete HRR line and horizontal crack-propagation line of square lattice.} \textbf{(a)} Linear constitutive law $p$ = 1. \textbf{(b)} Non-linear constitutive law $p$ = 2.}
  \label{fig-supp:Figure_SI_7}
\end{figure}

\vspace*{0pt}
\begin{figure}[H]
  \centering
  \includegraphics[trim={0cm 0cm 0cm 0cm},clip, width=0.75\textwidth]{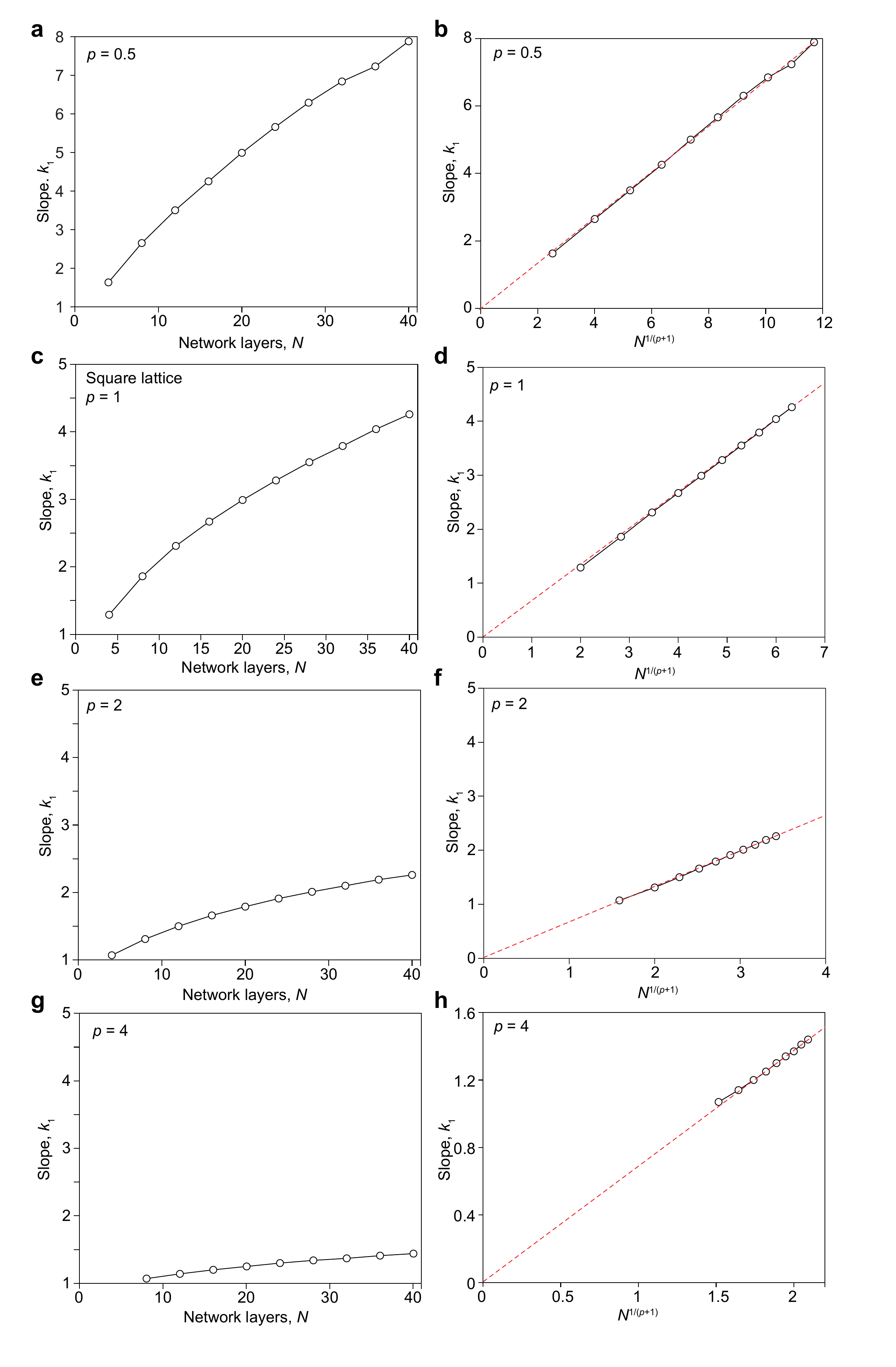}
  \caption[\textbf{The slope of square lattice network versus the network layer $N$.}]
  {\textbf{The slope of square lattice network versus the network layer $N$.} \textbf{(a, c, e, g)} The slope $k$ versus the network layers $N$ with different chains $p = 0.5$, 1, 2, 4. \textbf{(b, d, f, h)} The corresponding slope $k$ versus $N^{1/(p+1)}$ with different chains $p = 0.5$, 1, 2, 4.
  }
  \label{fig-supp:Figure_SI_4}
\end{figure}

\vspace*{0pt}
\begin{figure}[H]
  \centering
  \includegraphics[trim={0cm 0cm 0cm 0cm},clip, width=1.0\textwidth]{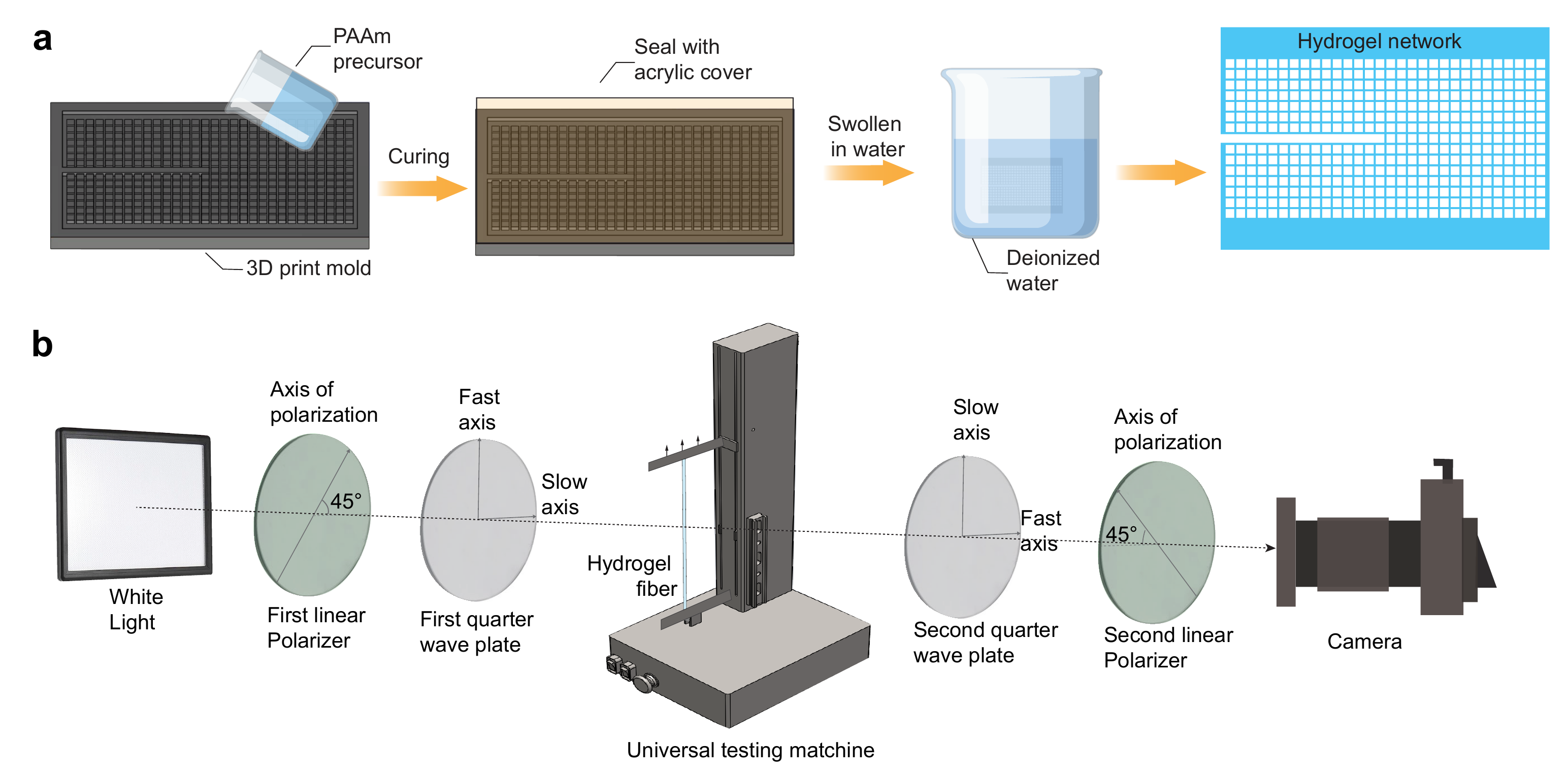}
  \caption[\textbf{Procedure for synthesizing the hydrogel network and schematic of the homemade circular polariscope setup.}]
  {\textbf{Procedure for synthesizing the hydrogel network and schematic of the homemade circular polariscope setup.} \textbf{(a)} Schematic of the procedure for synthesizing the hydrogel network. \textbf{(b)} Schematic of the homemade photoelastic setup, which consists of a white light, a camera, two linear polarizers, two quarter-wave plates, a universal testing machine, and a hydrogel fiber.
  }
  \label{fig-supp:Exp_1}
\end{figure}

\vspace*{0pt}
\begin{figure}[H]
  \centering
  \includegraphics[trim={0cm 0cm 0cm 0cm},clip, width=1.0\textwidth]{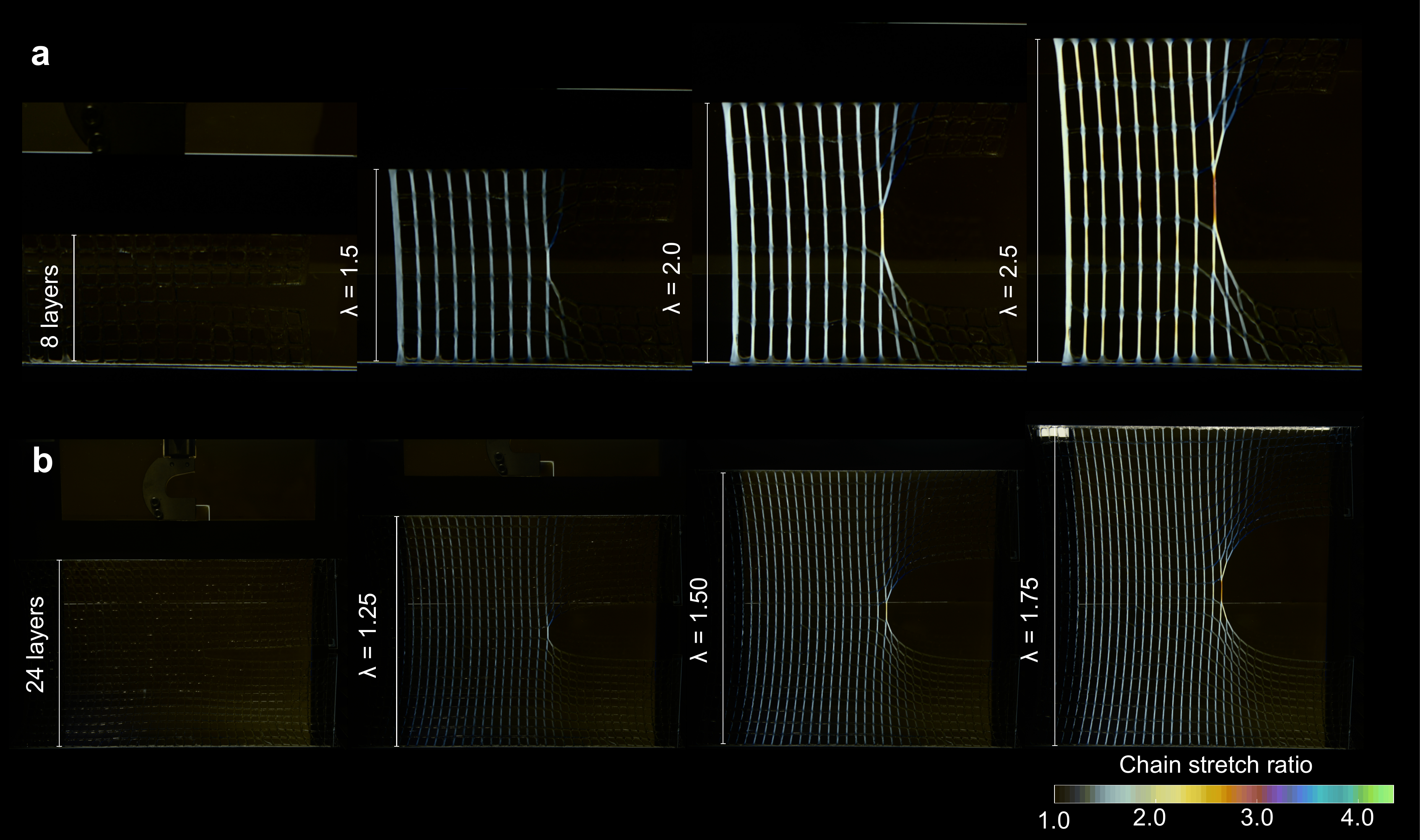}
  \caption[\textbf{Images of hydrogel network with different stretch ratio captured under polarized light.}]
  {\textbf{Images of hydrogel network captured under polarized light.} \textbf{(a)} Hydrogel network with network layer $N = 8$. \textbf{(b)} Hydrogel network with network layer $N = 24$. 
  }
  \label{fig-supp:Exp_Square}
\end{figure}

\end{document}